\newcommand{\Hl}[0]{H_\text{loc}}
\newcommand{\ket}[1]{\left\vert#1\right\rangle}
\newcommand{\bra}[1]{\left\langle#1\right\vert}
\newcommand{\braket}[2]{\left\langle#1\vert#2\right\rangle}
\newcommand{\set}[1]{\left\lbrace#1\right\rbrace}
\newcommand{\comm}[2]{\left[#1,#2\right]}
\newcommand{\brak}[1]{\left\langle#1\right\rangle}
\newcommand{\tr}[0]{\text{Tr}}
\newcommand{\modulus}[1]{\left\vert#1\right\vert}
\theoremstyle{plain}
\newtheorem{theorem}{Theorem}
\newtheorem{lemma}[theorem]{Lemma}
\newcommand{\norm}[1]{\left\|#1\right\|}
\theoremstyle{definition}
\theoremstyle{remark}
\begin{document}
\renewcommand\abstractname{}

\title{Quantum Speed Limits for Quantum Information Processing Tasks}
\author{Jeffrey M. Epstein}
\affiliation{Berkeley Center for Quantum Information and Computation, Berkeley, California 94720 USA}
\affiliation{Department of Physics, University of California, Berkeley, California 94720 USA}

\author{K. Birgitta Whaley}
\affiliation{Berkeley Center for Quantum Information and Computation, Berkeley, California 94720 USA}
\affiliation{Department of Chemistry, University of California, Berkeley, California 94720 USA}
\date{\today}
\begin{abstract} 
We derive algebraic bounds on achievable rates for quantum state transfer and entanglement generation in general quantum systems. We apply these bounds to graph-based models of local quantum spin systems to obtain speed limits on these tasks. Comparison with numerical optimal control results for spin chains suggests that unexplored regions of the dynamical landscape may support enhanced performance of key quantum information processing tasks.
\end{abstract}

\maketitle

\newpage

\section{1. Introduction}
Robust and efficient quantum control is increasingly relevant to quantum science and technology. At present, the theory of quantum control of small systems (low-dimensional Hilbert spaces) is significantly more advanced than the corresponding theory for large systems (high-dimensional Hilbert spaces). While this situation reflects the current experimental state of the art, a complete toolkit for the quantum control of future experiments and devices must include strategies suited to both small and large systems.

In the low-dimensional setting, e.g. for one or two qubits, a Lie algebraic framework exists for finding optimal control protocols for many tasks \cite{D'Alessandro2007}. Unfortunately, the relevant algebraic tools and techniques are intractable in generic high-dimensional (many-body) cases. As a partial remedy to this obstacle, there is a large body of work on numerical techniques for finding efficient control protocols for quantum information processing in many-body systems. In the absence of analytic solutions to optimal control problems in this setting, bounding optimal times for accomplishing various tasks becomes important, see e.g. \cite{Ashhab2012}. Given the complexity of numerically-obtained optimal control sequences for simple tasks such as quantum state transfer, using such bounds to check the near-optimality of numerical solutions may be the only tractable option for high-dimensional quantum optimal control.

The problem of bounding optimal control times is also significantly different in the small and large system contexts. In small quantum systems, it is not unreasonable to suppose that a wide range of couplings is available, and ``quantum speed limits" such as those of Margolus-Levitin and Mandelstam-Tamm \cite{Mandelstam1991,Margolus1998} can be used to obtain meaningful bounds on the rate at which information processing tasks may be achieved. In Appendix A, we set out several such results for comparison with the many-body case studied here. It is important to note that such bounds reflect only the spectral properties of the Hamiltonian, and place no further constraint on its form.

In contrast to the low-dimensional case, many-body control problems must in general account for real-space properties of the system. While it may be possible to directly couple any two qubits in a small quantum processor, direct coupling of distant spins in a long spin chain may be impossible. For this reason, the bounds available in the low-dimensional setting tell us little about e.g. the minimum time required, given certain local interactions and external control fields, to apply a swap gate to a distant pair of qubits. In other words, the bare Hilbert space structure alone fails to capture some relevant information. Different techniques are required to obtain meaningful information.

Bravyi et al. showed \cite{Bravyi2006} that the Lieb-Robinson bound \cite{Lieb1972} can be used to obtain bounds on classical channel capacity and correlation formation in spin systems evolving under local Hamiltonians. In this paper, we use similar techniques to establish bounds on the speed with which high-fidelity quantum state transfer and entanglement generation can be performed in general quantum systems, then specialize to local spin systems. This bound is directly relevant to proposed quantum computer architectures based on spin chains, such as the nitrogen-vacancy center proposal of Yao et al. \cite{Yao2012}.

\section{2. Mathematical Framework}
Finite-dimensional quantum mechanics can be studied with no reference to spatial organization. In practice, however, some tensor product decompositions of Hilbert spaces have physical interpretations that are relevant for understanding what is experimentally achievable. An convenient way to formalize this notion of locality uses graphs to construct Hilbert spaces by associating small Hilbert spaces to each vertex. The full Hilbert space is taken to be the tensor product of the vertex spaces, so that any subset of vertices constitutes a subsystem. As in many other accounts, e.g. \cite{Nachtergaele2010}, we consider a graph $G=(V,E)$ with the following dictionary:
\begin{center}
\begin{tabular}{cccc}
Graph & & & Hilbert Space/Operator\\\hline
$v\in V$ & & & $\mathcal{H}_v$\\
$X\subseteq V$ & & & $\mathcal{H}_X=\bigotimes_{v\in X}\mathcal{H}_v$\\
$X\subset Y\subset V$  & & & $\mathcal{B}(\mathcal{H}_X)\simeq \mathcal{B}(\mathcal{H}_X)\otimes\mathbb{1}_{Y\setminus X}$
\end{tabular}
\end{center}
The isomorphism is the obvious one. For compactness of notation, elements of $\mathcal{B}(\mathcal{H}_X)$ will be called ``operators acting on $X$". $V$ is assumed to be finite, as are the $\mathcal{H}_v$.

A useful procedure \cite{Bravyi2006} that is natural to define in this graph picture is localization of an operator to a particular region, i.e. a subset of vertices. Given an operator $A$, the $X$-localization of $A$, $\left[A\right]_X$, is defined as
\begin{equation}
\left[A\right]_X=\int_{U(\bar{X})} UAU^\dagger d\mu(U),
\end{equation}
where $\mu$ is the Haar measure over the unitary group on $\bar{X}=V\setminus X$. Note that $\left[A\right]_X$ acts as the identity on $\bar{X}$ and $\norm{\left[A\right]_X}\leq \norm{A}$, where $\norm{\cdot}$ is the operator norm.

\section{3. Algebraic Bounds on Control}
In this section, we show that achievable rates of two important tasks in quantum information processing, quantum state transfer between subsystems and entanglement generation, obey bounds that follow directly from bounds on the norms of certain commutators. This allows the extensive work on such bounds (see e.g. \cite{Nachtergaele2010B}) to be used to obtain speed limits for quantum control tasks.

\subsection{Quantum State Transfer}

Suppose we would like to transfer a quantum state from region $X$ to region $Y$ of a local spin system (perhaps a spin chain) by applying an operator $T$, which may for instance be the time-evolution operator generated by some local Hamiltonian. To bound the speed with which this task can be accomplished, we must fix an appropriate figure of merit. One possible choice would be
\begin{equation}
\inf_{\rho}F\left(\tr_{\bar{X}}\rho,\tr_{\bar{Y}}\rho_T\right)
\end{equation}
where $\rho_T=T\rho T^\dagger$, $F$ is the fidelity, and the infimum is taken either over all density operators of the full system or perhaps over all density operators of the form $\rho_X\otimes\rho_{\bar{X}}$ for fixed $\rho_{\bar{X}}$. If this quantity is large, then $T$ can be used to transfer arbitrary states from $X$ to $Y$.

Unfortunately, upper bounding this figure of merit is difficult due to the presence of the infimum. A more convenient figure of merit follows from noting that if $T$ is able to effect state transfer for \textit{any} input state, then there must be some operator $A$ on $X$, which can be thought of as a state-preparation operator, such that 
\begin{equation}
F(\tr_{\bar{Y}}T\rho T^\dagger,\tr_{\bar{Y}}TA\rho A^\dagger T^\dagger)\label{figofmeritA}
\end{equation}
is small, since we must be able to transfer orthogonal pairs of states. In Appendix A, we show that this characterization of state transfer is related to the speed with which \textit{classical} information can be sent from one end of the chain to the other.

Now we can prove a bound on this figure of merit. Denote $\mathcal{O}_T=T\mathcal{O}T^\dagger$.
\begin{theorem}
Let $X$ and $Y$ be disjoint subsystems of a system $S$ in the initial state $\rho$. For some fixed unitary $T$ on $S$, suppose that for any $\mathcal{O}_X$, $\mathcal{O}_Y$ acting on $X$ and $Y$, respectively,
\begin{equation}
\norm{\comm{(\mathcal{O}_X)_T}{\mathcal{O}_Y}}\leq c_T(X,Y)\norm{\mathcal{O}_X}\norm{\mathcal{O}_Y}\label{cdef}
\end{equation}
holds, with $c_T$ a scalar function of  subystems $X$ and $Y$, for operator $T$. Then if $A$ is some operator on $X$, the fidelity between the reduced states of subsystem $Y$ given the overall states $T\rho T^\dagger$ and $TA\rho A^\dagger T^\dagger$ satisfies:
\begin{equation}
F\geq 1-c_T(X,Y)\norm{A}\label{Fbound}
\end{equation}
where we use the definition of the fidelity $F(\rho,\sigma)=\tr\sqrt{\sqrt{\sigma}\rho\sqrt{\sigma}}$ generalizing the pure state definition $F(\psi,\phi)=\modulus{\braket{\psi}{\phi}}$.
\end{theorem}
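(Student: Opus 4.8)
The plan is to convert the fidelity bound into a trace--distance estimate and then extract that estimate from hypothesis~\eqref{cdef} by a single ``commutator swap.'' Write $\rho'=T\rho T^\dagger$ and $A_T=(A)_T=TAT^\dagger$, so that the two states in question are $\sigma=\tr_{\bar Y}\rho'$ and $\tau=\tr_{\bar Y}(A_T\rho' A_T^\dagger)$ (indeed $A_T\rho' A_T^\dagger=TA\rho A^\dagger T^\dagger$). By the Fuchs--van de Graaf inequality $1-F(\sigma,\tau)\le\tfrac12\norm{\sigma-\tau}_1\le\norm{\sigma-\tau}_1$, where $\norm{\cdot}_1$ is the trace norm, so it suffices to bound $\norm{\sigma-\tau}_1$, and for this I would use the duality $\norm{\sigma-\tau}_1=\sup\{\,\modulus{\tr(M(\sigma-\tau))}:M\text{ acts on }Y,\ \norm{M}\le1\,\}$.

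The core step will be to evaluate $\tr(M(\tau-\sigma))$ for such an $M$. Extending $M$ to $M\otimes\mathbb{1}_{\bar Y}$ and using cyclicity of the trace, $\tr(M\tau)=\tr(A_T^\dagger M A_T\,\rho')$ and $\tr(M\sigma)=\tr(M\rho')$, so $\tr(M(\tau-\sigma))=\tr\!\big((A_T^\dagger M A_T-M)\rho'\big)$. Now decompose $A_T^\dagger M A_T-M=A_T^\dagger\comm{M}{A_T}+(A_T^\dagger A_T-\mathbb{1})M$. For a state--preparation operator the relevant case is $A^\dagger A=\mathbb{1}_X$ (so that $A\rho A^\dagger$ is again normalized), whence $A_T^\dagger A_T=\mathbb{1}$ and the second term drops, leaving $\tr(M(\tau-\sigma))=\tr\!\big(A_T^\dagger\comm{M}{A_T}\rho'\big)$.

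The bound then follows from elementary norm inequalities: $\modulus{\tr(A_T^\dagger\comm{M}{A_T}\rho')}\le\norm{A_T^\dagger\comm{M}{A_T}}\,\norm{\rho'}_1\le\norm{A_T}\,\norm{\comm{M}{A_T}}$ since $\norm{\rho'}_1=1$, and \eqref{cdef} with $\mathcal{O}_X=A$, $\mathcal{O}_Y=M$ gives $\norm{\comm{A_T}{M}}\le c_T(X,Y)\norm{A}\norm{M}$ while $\norm{A_T}=\norm{A}$; taking the supremum over $\norm{M}\le1$ yields $\norm{\sigma-\tau}_1\le c_T(X,Y)\norm{A}^2$, and substituting into the Fuchs--van de Graaf bound gives~\eqref{Fbound} (using $\norm{A}^2\le\norm{A}$ for a normalized preparation operator together with the weak form $1-F\le\norm{\sigma-\tau}_1$). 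I expect the only real obstacle to be the normalization point flagged above: for a generic non--isometric $A$ the operator $TA\rho A^\dagger T^\dagger$ is subnormalized and the discarded term $(A_T^\dagger A_T-\mathbb{1})M$ is \emph{not} controlled by \eqref{cdef}, so one must either restrict to $A$ isometric on the support of $\rho$ or read $F$ through its generalized definition with the matching normalization; everything else (the partial--trace/cyclicity identities, the trace inequality, and Fuchs--van de Graaf) is routine.
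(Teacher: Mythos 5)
Your argument is correct on the domain where the theorem is actually applied (isometric/unitary $A$), but it takes a genuinely different route from the paper's. The paper introduces the Haar-localized operator $\left[A_T\right]_{\bar{Y}}=\int UA_TU^\dagger d\mu(U)$, splits the trace distance with the triangle inequality (the localized term drops under $\tr_{\bar{Y}}$ because it acts as the identity on $Y$), invokes a separate lemma $\norm{A\rho A^\dagger-B\rho B^\dagger}_1\leq 2\norm{A-B}$ proved via the diamond norm, and finally bounds $\norm{A_T-\left[A_T\right]_{\bar{Y}}}\leq\int\norm{\comm{A_T}{U}}d\mu(U)\leq c_T(X,Y)\norm{A}$ using unitary invariance of the Haar measure. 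You instead dualize the trace norm against observables $M$ on $Y$, pass to the Heisenberg picture, and peel off the commutator directly via $A_T^\dagger MA_T-M=A_T^\dagger\comm{M}{A_T}+(A_T^\dagger A_T-\mathbb{1})M$; hypothesis \eqref{cdef} then enters in one line. Your route is more elementary --- no Haar averaging, no superoperator-norm lemma --- and, if you keep the factor $\tfrac12$ in Fuchs--van de Graaf, it yields the slightly sharper bound $F\geq 1-\tfrac12c_T(X,Y)$ for unitary $A$. The normalization caveat you flag ($A_T^\dagger A_T\neq\mathbb{1}$ leaves an uncontrolled term) is real, but it is not a deficiency relative to the published argument: the paper's lemma is stated for unitary $A,B$ yet is applied with $B=\left[A_T\right]_{\bar{Y}}$, a Haar average of unitaries that is generally not unitary, so the paper's proof implicitly carries at least the same restriction. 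In effect both arguments establish the theorem for unitary (or isometric) $A$, which is what the state-transfer application --- e.g.\ $A=X_1$ in the spin-chain example --- requires.
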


\begin{proof}
The trace distance between the two reduced states on system $Y$ is
\begin{equation}
\begin{aligned}
d&=\norm{\tr_{\bar{Y}}\left(A_T\rho_T A_T^\dagger-\rho_T\right)}_1.
\end{aligned}
\end{equation}
Using the triangle inequality, the properties of localized operators as defined above, and the monotonicity of the trace distance under partial trace, we obtain the following bound:
\begin{equation}
\begin{aligned}
d&\leq \norm{\tr_{\bar{Y}}\left(A_T\rho_T A_T^\dagger-\left[A_T\right]_{\bar{Y}}\rho_T \left[A_T^\dagger\right]_{\bar{Y}}\right)}_1\\
&\hspace{10pt}+\norm{\tr_{\bar{Y}}\left(\left[A_T\right]_{\bar{Y}}\rho_T \left[A_T^\dagger\right]_{\bar{Y}}-\rho_T\right)}_1\\
&\leq \norm{A_T\rho_T A_T^\dagger-\left[A_T\right]_{\bar{Y}}\rho_T \left[A_T^\dagger\right]_{\bar{Y}}}_1\\
&\leq 2\norm{A_T-\left[A_T\right]_{\bar{Y}}}.
\end{aligned}
\end{equation}
where the final inequality is proven in the lemma below. Note that the second term in the first line of this calculation vanishes because the operator $\left[A\right]_{\bar{Y}}$ acts as the identity on $Y$, so that when the partial trace is taken, the two resulting operators are the same. Following \cite{Bravyi2006}, we bound this norm distance by taking advantage of the unitary invariance of the Haar measure:
\begin{equation}
\begin{aligned}
\norm{A_T-\left[A_T\right]_{\bar{Y}}}&=\norm{A_T-\int UA_TU^\dagger d\mu(U)}\\
&=\int\norm{\comm{A_T}{U}} d\mu(U)\\
&\leq c_T(X,Y)\norm{A}
\end{aligned}
\end{equation}
where the integral is over the unitary group on $Y$. We conclude that $d\leq 2c_T(X,Y)\norm{A}$. Using the relation $F(\rho,\sigma)\geq 1-\frac{1}{2}\norm{\rho-\sigma}_1$ between the fidelity and the trace distance, we obtain the stated bound on fidelity, Eq. \ref{Fbound}.
\end{proof}

This theorem quantifies the relationship between the algebraic features of the operator $T$, as captured by the bound $c_T(X,Y)$ on the norms of the commutator Eq. \ref{cdef}, and its operational features. In particular, if $c_T(X,Y)$ is small, the influence of a local operator on system $X$ on the state of system $Y$ after application of $T$ is also small. The most interesting situations to consider are those in which local operations on $X$ and $Y$ may be applied at will, but $T$ is given, as might be the case for a pair of coupled qubits or a spin system with fixed interaction terms and variable local control fields.

Here we prove the lemma connecting the trace distance of density operators to the operator norm distance of unitary operators.
\begin{lemma}
	Let $A$ and $B$ be unitary operators on a finite-dimensional Hilbert space. Then for density operator $\rho$,
	\begin{equation}
	\norm{A\rho A^\dagger-B\rho B^\dagger}_1\leq 2\norm{A-B}.
	\end{equation}
\end{lemma}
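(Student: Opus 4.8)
The plan is to reduce the bound to a one-line telescoping identity followed by the standard submultiplicativity of the trace norm against the operator norm. First I would write the difference so that each term carries a single factor of $A-B$:
\begin{equation}
A\rho A^\dagger-B\rho B^\dagger=(A-B)\rho A^\dagger+B\rho\left(A^\dagger-B^\dagger\right).
\end{equation}
(Splitting off $A\rho(A^\dagger-B^\dagger)$ on the right instead works equally well; the choice is immaterial.) Applying the triangle inequality for $\norm{\cdot}_1$ then gives $\norm{A\rho A^\dagger-B\rho B^\dagger}_1\leq\norm{(A-B)\rho A^\dagger}_1+\norm{B\rho(A^\dagger-B^\dagger)}_1$.

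Next I would bound each term using the H\"older-type inequality $\norm{XYZ}_1\leq\norm{X}\,\norm{Y}_1\,\norm{Z}$, valid in finite dimensions, together with the facts that $\norm{\rho}_1=1$ since $\rho$ is a density operator, that $\norm{A}=\norm{B}=\norm{A^\dagger}=1$ by unitarity, and that $\norm{A^\dagger-B^\dagger}=\norm{(A-B)^\dagger}=\norm{A-B}$ because the operator norm is invariant under taking adjoints. Each of the two terms is then at most $\norm{A-B}$, and summing them yields the claimed factor of $2$.

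The only point worth flagging is the correct use of $\norm{XYZ}_1\leq\norm{X}\norm{Y}_1\norm{Z}$; this is entirely standard, following from the observation that left- or right-multiplication by an operator of norm at most $1$ cannot increase any Schatten norm, applied once on each side. There is no genuine obstacle here: the finite-dimensionality hypothesis in the statement serves only to ensure the trace norm is finite and these inequalities hold with no analytic caveats, and nothing subtler than the triangle inequality and submultiplicativity is needed.
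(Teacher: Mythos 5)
Your proof is correct, and it takes a genuinely different and more elementary route than the paper. You use the telescoping decomposition $A\rho A^\dagger-B\rho B^\dagger=(A-B)\rho A^\dagger+B\rho(A^\dagger-B^\dagger)$ followed by the triangle inequality and the H\"older-type bound $\norm{XYZ}_1\leq\norm{X}\norm{Y}_1\norm{Z}$; every step is self-contained and uses only standard Schatten-norm facts. The paper instead passes to the level of superoperators: it bounds the left-hand side by the induced trace norm of the superoperator $A\cdot A^\dagger-B\cdot B^\dagger$, dominates that by the diamond norm, and then cites the result of Aharonov et al.\ that the diamond norm of a difference of unitary conjugations is at most $2\norm{A-B}$. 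The two approaches prove the same inequality, but yours is arguably preferable as a proof of the lemma as stated, since it avoids importing the machinery of completely bounded norms and external references; the paper's route has the advantage of situating the bound within the standard framework of channel distinguishability (and the diamond-norm bound is the stronger statement, holding uniformly over inputs on extended systems, which is more than the lemma requires). One small point worth making explicit in your write-up is that the first step of the H\"older bound is just submultiplicativity of the trace norm against the operator norm applied once on each side, exactly as you flag; there is no gap.
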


\begin{proof}
	The norm difference on the left-hand side can be bounded by a supremum over operators on the Hilbert space as
	\begin{equation}
	\begin{aligned}
	\norm{A\rho A^\dagger-B\rho B^\dagger}_1&\leq\sup_{X\neq 0}\frac{\norm{AXA^\dagger-BXB^\dagger}_1}{\norm{X}_1}.
	\end{aligned}
	\end{equation}
	This is the trace norm $\norm{A\cdot A^\dagger-B\cdot B^\dagger}_1$ of the superoperator defined by $(A\cdot A^\dagger-B\cdot B^\dagger)(\rho)=A\rho A^\dagger-B\rho B^\dagger$. Using results from  \cite{Aharonov1998} for the superoperator trace and diamond norms for superoperators of this form, we obtain
	\begin{equation}
	\begin{aligned}
	\norm{A\cdot A^\dagger-B\cdot B^\dagger}_1&\leq \norm{A\cdot A^\dagger-B\cdot B^\dagger}_\diamond\\
	&\leq 2\norm{A-B}.
	\end{aligned}
	\end{equation}
\end{proof}

As a simple illustration of the main result of this section, consider a spin chain with $N$ sites evolving under a Hamiltonian $H$ that preserves the numbers of up and down spins. In other words, we have $\comm{H}{Z_1+\ldots+Z_N}=0$. Suppose that the chain is initialized in the state $\ket{\underline{1}}$ with all spins down except the first, then allowed to evolve under $H$ for time $t$. We would like to know the probability $p(t)$ of a spin flip at the $N^\text{th}$ site. Denoting spin down and up by $\ket{0}$ and $\ket{1}$, respectively, we have
\begin{equation}
\begin{aligned}
1-p(t)&=\bra{0}\left(e^{-iHt}\ket{\underline{1}}\bra{\underline{1}}e^{iHt}\right)_N\ket{0}\\
&=F^2\left(\left(e^{-iHt}\ket{\underline{1}}\bra{\underline{1}}e^{iHt}\right)_N,\left(e^{-iHt}\ket{\mathbf{0}}\bra{\mathbf{0}}e^{iHt}\right)_N\right)\\
&=F^2\left[\left(e^{-iHt}X_1\ket{\mathbf{0}}\bra{\mathbf{0}}X_1e^{iHt}\right)_N,\right.\\
&\hspace{60pt}\left.\left(e^{-iHt}\ket{\mathbf{0}}\bra{\mathbf{0}}e^{iHt}\right)_N\right]
\end{aligned}
\end{equation}
where $\ket{\mathbf{0}}$ is the state with all spins down. Applying the bound Eq. \ref{Fbound} and using the fact that $\norm{X_1}=1$ we find
\begin{equation}
1-p(t)\geq (1-c_t(\set{1},\set{N}))^2.
\end{equation}
where $c_t(\set{1},\set{N})$ is the Lieb-Robinson coefficient for regions separated by graph distance $N-1$ at times $t$ apart. For compactness, we have written $c_t$ instead of $c_{U(t)}$. See Section 4 for explicit expressions for this coefficient. Rearranging, we obtain
\begin{equation}
p(t)\leq c_t(N-1)\left[2-c_t(N-1)\right].
\end{equation}

\subsection{Entanglement Generation}

Another task of interest for quantum information processing is entanglement generation. Suppose that two distant regions begin in a separable state and we would like to entangle them by applying $T$. Can we do so? We start by proving a theorem showing that if all correlations between two subsystems, as measured by connected correlation functions of norm-bounded operators, are initially small and the constant $c_T(X,Y)$ (see Eq. \ref{cdef}) is bounded close to zero, the fidelity of the reduced state of $XY$ with any maximally entangled state after application of $T$ is bounded by a number less than one.
\begin{theorem}
	Let $\rho$ be a state of a bipartite $d\times d$-dimensional system $XY$ such that for any $A,B$ Hermitian operators on $X$ and $Y$, respectively, with $\norm{A},\norm{B}\leq 1$, the bound $\modulus{\brak{AB}_c}\leq f\leq 2/3$ on the magnitude of the connected correlator $\brak{AB}-\brak{A}\brak{B}$ holds. Then
	\begin{equation}
	F(\rho,\psi)\leq \sqrt{\frac{79}{81}+\frac{2f}{27}-\frac{f^2}{18}}\label{Sbound}
	\end{equation}
	for any maximally entangled state $\Psi$.
\end{theorem}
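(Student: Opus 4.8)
\emph{Proof strategy.} The plan is to reduce to the standard maximally entangled state, rewrite the fidelity as an overlap, and then bound that overlap using a fixed family of local observables that the maximally entangled state correlates perfectly. First I would use the fact that every maximally entangled $\Psi$ on $\mathcal{H}_X\otimes\mathcal{H}_Y$ equals $(\mathbb{1}\otimes U)\ket{\Phi^+}$ with $\ket{\Phi^+}=d^{-1/2}\sum_i\ket{ii}$ and $U$ unitary; since conjugating a Hermitian operator on $Y$ by $U$ preserves its operator norm, the hypothesis $\modulus{\brak{AB}_c}\le f$ is invariant under this substitution, so it suffices to bound $F(\rho,\Phi^+)$. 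Because $\Phi^+$ is pure, $F(\rho,\Phi^+)=\sqrt{\bra{\Phi^+}\rho\ket{\Phi^+}}$, and the goal becomes an upper bound on $p:=\bra{\Phi^+}\rho\ket{\Phi^+}$ in terms of $f$.

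Next I would choose three Hermitian observables $A_1,A_2,A_3$ on $X$ with $\norm{A_k}\le1$ and $A_k^2=\mathbb{1}$ — modeled on the Pauli matrices, e.g.\ $d/2$ block copies of each of $\sigma_x,\sigma_y,\sigma_z$ when $d$ is even — together with $B_k=\bar A_k$ on $Y$. The transpose identity $(M\otimes N)\ket{\Phi^+}=(\mathbb{1}\otimes NM^T)\ket{\Phi^+}$ then gives $(A_k\otimes\bar A_k)\ket{\Phi^+}=\ket{\Phi^+}$, so each $Q_k:=\tfrac12(\mathbb{1}+A_k\otimes B_k)$ is a projector (half of $\mathbb{1}$ plus a reflection) whose range contains $\ket{\Phi^+}$; hence $\ket{\Phi^+}\!\bra{\Phi^+}\le Q_k$ for every $k$, and averaging, $\ket{\Phi^+}\!\bra{\Phi^+}\le\tfrac13(Q_1+Q_2+Q_3)$. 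Taking expectations in $\rho$ yields $p\le\tfrac12+\tfrac16\sum_k\brak{A_k\otimes B_k}_\rho$.

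I would then bound $\sum_k\brak{A_k\otimes B_k}_\rho=\sum_k\!\big(\brak{A_kB_k}_c+\brak{A_k}\brak{B_k}\big)$ from above. The connected correlators contribute at most $3f$ by hypothesis. For the single-site part, $\modulus{\brak{A_k}_\rho}=\modulus{\tr\!\big(A_k(\rho_X-\mathbb{1}/d)\big)}\le\norm{A_k}\,\norm{\rho_X-\mathbb{1}/d}_1\le\norm{\rho-\ket{\Phi^+}\!\bra{\Phi^+}}_1\le2\sqrt{1-p}$, using $\tr_Y\ket{\Phi^+}\!\bra{\Phi^+}=\mathbb{1}/d$, contractivity of the trace norm under partial trace, and the Fuchs--van de Graaf inequality; moreover the Pauli-type normalization gives $\sum_k\brak{A_k}^2\le1$ (it is the squared Bloch-vector length of a reduced qubit state), which further sharpens the bound on $\sum_k\brak{A_k}\brak{B_k}$. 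Substituting produces an inequality for $p$ that contains a $\sqrt{1-p}$ term; solving it, and taking the least favorable value of $d$, gives a bound of the form of Eq.~\ref{Sbound}. I would then check consistency at the endpoint $f=2/3$, where the bound should (and does) degenerate to the trivial $F\le1$.

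The step I expect to be the real obstacle is making the choice of $A_1,A_2,A_3$ uniform in $d$. For even $d$ the block-Pauli construction supplies three mutually anticommuting reflections with exact tracelessness and $\sum_k\brak{A_k}^2\le1$, which is what makes the estimates clean; for odd $d$ there are no three anticommuting reflections, so the observables must be modified and pick up $O(1/d)$ defects — in their tracelessness, in the correlator $\brak{A_k\otimes\bar A_k}_{\Phi^+}$, and hence in the operator inequality $\ket{\Phi^+}\!\bra{\Phi^+}\le Q_k$ — and it is precisely this dimension-dependent slack, worst at the smallest admissible dimension, that turns the clean estimate into the particular numerical constants of Eq.~\ref{Sbound}. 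A secondary point of care is that the closing inequality feeds the correlator bound back into $p$ through $\sqrt{1-p}$, so it is quadratic and must be solved honestly: a version of the argument that bounds $\brak{A_k}\brak{B_k}$ crudely by $\norm{A_k}\norm{B_k}=1$ collapses to the vacuous $p\le1$.
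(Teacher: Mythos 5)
Your strategy is genuinely different from the paper's. The paper's proof is perturbative: writing $\Delta=\rho-\ket{\Psi}\bra{\Psi}$, it shows that the connected correlator in $\rho$ differs from that in $\Psi$ by at most $3\norm{\Delta}_1$ for all admissible $A,B$, invokes the existence (for any maximally entangled state) of local observables with connected correlator at least $2/3$, concludes $3\norm{\Delta}_1\geq 2/3-f$, and converts this lower bound on trace distance into an upper bound on fidelity via $F^2\leq 1-\norm{\Delta}_1^2/2$; the constants $79/81$, $2/27$, $-1/18$ fall directly out of that chain and nothing else. Your route instead builds a fidelity witness $\ket{\Phi^+}\bra{\Phi^+}\leq\frac{1}{3}\sum_kQ_k$ from three reflections perfectly correlated by $\Phi^+$ and feeds the correlator hypothesis into its expectation value. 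For even $d$ this closes cleanly and does not even need the $\sqrt{1-p}$ self-consistency step: the Bloch-vector bound $\sum_k\brak{A_k}^2\leq 1$ plus Cauchy--Schwarz gives $p\leq 2/3+f/2$ outright, which one can check dominates the paper's bound on all of $[0,2/3]$ (with equality at $f=2/3$). Each approach buys something: the paper's argument is dimension-independent and two lines of arithmetic, but rests on the unproved assertion that every maximally entangled state admits $A',B'$ with $\brak{A'B'}_c\geq 2/3$ --- an assertion your block-Pauli construction in fact substantiates, since $\brak{A_k\otimes\bar{A}_k}_{\Phi^+}=1$ and $\brak{A_k}_{\Phi^+}=0$ for traceless reflections; your argument gives a sharper constant but only where the construction exists.

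That said, two gaps keep this from being a proof of the stated theorem. First, the odd-$d$ case is not a cosmetic defect: the projector property of $Q_k$ and the relation $(A_k\otimes\bar{A}_k)\ket{\Phi^+}=\ket{\Phi^+}$ survive any choice with $A_k^2=\mathbb{1}$, but the estimate $\sum_k\brak{A_k}^2\leq 1$ does not --- if the leftover dimension is assigned eigenvalue $+1$ by all three $A_k$, a reduced state concentrated there gives $\sum_k\brak{A_k}^2=3$ and your bound collapses to $p\leq 1+f/2$. Repairing this requires exactly the $\sqrt{1-p}$ feedback you gesture at (a reduced state concentrated on few dimensions forces $p$ small directly), and you have not carried that computation out. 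Second, your final sentence asserts that solving the resulting inequality ``gives a bound of the form of Eq.~\ref{Sbound}''; it will not. Your method produces different constants, and the theorem as stated is established only once you verify --- as one can, but you must actually do it --- that your bound implies $F\leq\sqrt{79/81+2f/27-f^2/18}$ on $[0,2/3]$. As written, the proposal derives a different inequality and points at the target rather than reaching it.
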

\begin{proof}
	Let $\ket{\Psi}$ be a maximally entangled state of $\mathbb{C}^d\otimes\mathbb{C}^d$. For an arbitrary density matrix $\rho$, let $\Delta=\rho-\ket{\Psi}\bra{\Psi}$. For $A\in\mathcal{B}(\mathbb{C}^d)\otimes\mathbb{1}_d$ and $B\in\mathbb{1}_d\otimes \mathcal{B}(\mathbb{C}^d)$ Hermitian with $\norm{A},\norm{B}\leq 1$, the connected correlation function of $A$ and $B$ in the state $\rho$ is
	\begin{equation}
	\begin{aligned}
	\brak{AB}_{c}&=\tr\left(\rho AB\right)-\tr\left(\rho A\right)\tr\left(\rho B\right)\\
	&=\bra{\Psi}AB\ket{\Psi}-\bra{\Psi}A\ket{\Psi}\bra{\Psi}B\ket{\Psi}\\
	&\hspace{10pt}+\tr\left(\Delta AB\right)-\tr\left(\Delta A\right)\tr\left(\Delta B\right).
	\end{aligned}
	\end{equation}
	Rearranging and taking the modulus:
	\begin{equation}
	\begin{aligned}
	&\modulus{\brak{AB}_{c}-\bra{\Psi}AB\ket{\Psi}+\bra{\Psi}A\ket{\Psi}\bra{\Psi}B\ket{\Psi}}\\
	&\hspace{20pt}\leq\modulus{\tr\left(\Delta AB\right)}+\modulus{\tr\left(\Delta A\right)}\modulus{\tr\left(\Delta B\right)}\\
	&\hspace{20pt}\leq \norm{\Delta}_1\norm{A}\norm{B}+\norm{\Delta}_1^2\norm{A}\norm{B}\\
	&\hspace{20pt}\leq \norm{\Delta}_1+\norm{\Delta}_1^2\leq 3\norm{\Delta}_1.
	\end{aligned}
	\end{equation}
	The last inequality used the fact that $\norm{\Delta}_1\leq\norm{\rho}_1+\norm{\ket{\Psi}\bra{\Psi}}_1=2$. Now using the reverse triangle inequality:
	\begin{equation}
	\begin{aligned}
	3\norm{\Delta}_1\geq \modulus{\modulus{\bra{\Psi}AB\ket{\Psi}-\bra{\Psi}A\ket{\Psi}\bra{\Psi}B\ket{\Psi}}-\modulus{\brak{AB}_{c}}}
	\end{aligned}.
	\end{equation}
	For any maximally entangled state, there are $A'$ and $B'$ such that $\brak{A'B'}_c\geq 2/3$, so that
	\begin{equation}
	\begin{aligned}
	3\norm{\Delta}_1\geq \modulus{\frac{2}{3}-\modulus{\brak{A'B'}_{c}}}
	\end{aligned}.
	\end{equation}
Now, it is given that for any $A$, $B$, $\modulus{\brak{AB}_c}\leq f$. Then
	\begin{equation}
	\begin{aligned}
	3\norm{\Delta}_1\geq\frac{2}{3}-f
	\end{aligned}
	\end{equation}
	where we can drop the modulus since by assumption $f\leq 2/3$. Then using $F^2\leq 1-\norm{\Delta}_1^2/2$ where $F$ is the fidelity of $\rho$ with $\ket{\Psi}\bra{\Psi}$, we obtain the stated bound Eq. \ref{Sbound}.
\end{proof}
Now we relate the correlation structure of the system after application of the operator $T$ to the algebraic structure of the time-evolved operators, as captured by Eq. \ref{cdef}:
\begin{theorem}\label{theorem_entanglement}
	Let a system $S$ be initialized in the state $\rho$ with the property that for any disjoint subsystems $X,Y\subset S$ and any $A$ and $B$ with $\norm{A},\norm{B}\leq 1$ acting on $X$ and $Y$, respectively, $\modulus{\brak{AB_c}}\leq f_0(X,Y)$. Then fixing subsystems $X$ and $Y$ and operators $A$ and $B$ on these, for any unitary operation $T$ on $S$ the following inequality holds for $\rho_T$:
	\begin{equation}
	\begin{aligned}
\modulus{\brak{AB}_c}&\leq f_0(Z,\bar{Z})\\
&\hspace{5pt}+2\left[\left(c_T(X,\bar{Z})+1\right)\left(c_T(Z,Y)+1\right)-1\right]
	\end{aligned}\label{corrbound}
	\end{equation}
	for any subsystem $Z$ such that $X\subseteq Z$ and $Y\subseteq \bar{Z}$.
\end{theorem}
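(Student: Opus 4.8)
The plan is to move the unitary off the state and onto the operators, reduce the claim to a statement about connected correlations in the \emph{initial} state $\rho$ across the $Z$--$\bar{Z}$ partition, and then control the cost of replacing the $T$-evolved operators by operators genuinely supported on $Z$ and $\bar{Z}$ using the commutator bound Eq.~\ref{cdef}.

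First I would rewrite the correlator: by cyclicity of the trace, $\brak{AB}_c$ evaluated in $\rho_T$ equals $\brak{\tilde{A}\tilde{B}}_c$ evaluated in $\rho$, where $\tilde{A}:=T^\dagger A T$ and $\tilde{B}:=T^\dagger B T$ (and the same holds termwise for $\brak{A}$ and $\brak{B}$). That the conjugation here is by $T^\dagger$ rather than $T$ is not a real obstacle: for a unitary $T$ one has $\norm{\comm{T^\dagger\mathcal{O}_1 T}{\mathcal{O}_2}}=\norm{\comm{\mathcal{O}_1}{T\mathcal{O}_2 T^\dagger}}$, and the graph-distance Lieb--Robinson coefficients of Section~4 are symmetric in their two region arguments and insensitive to $T\leftrightarrow T^\dagger$, so the commutator norms of $T^\dagger$-conjugated operators are still governed by the coefficients $c_T$ of Eq.~\ref{cdef}.

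Next I would localize: set $\hat{A}:=\left[\tilde{A}\right]_Z$ and $\hat{B}:=\left[\tilde{B}\right]_{\bar{Z}}$. Since $X\subseteq Z$ and $Y\subseteq\bar{Z}$, these act on $Z$ and $\bar{Z}$ respectively; localization preserves Hermiticity and does not increase the operator norm, so $\norm{\hat{A}}\le\norm{A}\le 1$ and $\norm{\hat{B}}\le\norm{B}\le 1$. Running the Haar-averaging estimate from the proof of Theorem~1, $\norm{\tilde{A}-\left[\tilde{A}\right]_Z}=\norm{\int(\tilde{A}-U\tilde{A}U^\dagger)\,d\mu(U)}\le\sup_U\norm{\comm{\tilde{A}}{U}}$ with the integral over unitaries $U$ on $\bar{Z}$, and invoking Eq.~\ref{cdef} with $\norm{U}=1$, gives $\norm{\tilde{A}-\hat{A}}\le c_T(X,\bar{Z})$ and, analogously (now with $U$ ranging over unitaries on $Z$), $\norm{\tilde{B}-\hat{B}}\le c_T(Z,Y)$.

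Finally, writing $\tilde{A}=\hat{A}+\delta_A$ and $\tilde{B}=\hat{B}+\delta_B$ with $\norm{\delta_A}\le a:=c_T(X,\bar{Z})$ and $\norm{\delta_B}\le b:=c_T(Z,Y)$, I would expand $\brak{\tilde{A}\tilde{B}}_c=\brak{\hat{A}\hat{B}}_c+\brak{\hat{A}\delta_B}_c+\brak{\delta_A\hat{B}}_c+\brak{\delta_A\delta_B}_c$ and bound the three extra connected correlators using the elementary inequality $\modulus{\brak{PQ}_c}\le\modulus{\brak{PQ}}+\modulus{\brak{P}}\,\modulus{\brak{Q}}\le 2\norm{P}\norm{Q}$, valid for any operators $P,Q$ and any state; this yields error at most $2a+2b+2ab=2[(a+1)(b+1)-1]$. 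Since $\hat{A}$ and $\hat{B}$ act on the disjoint subsystems $Z$ and $\bar{Z}$ with norm at most $1$, the hypothesis on $\rho$ gives $\modulus{\brak{\hat{A}\hat{B}}_c}\le f_0(Z,\bar{Z})$, and the triangle inequality assembles these into Eq.~\ref{corrbound}. I expect the only real difficulties to be organizational: keeping the conjugation direction straight in the first step so the coefficients that emerge are precisely $c_T(X,\bar{Z})$ and $c_T(Z,Y)$, and checking that the localized operators $\hat{A},\hat{B}$ still meet the norm and support hypotheses needed to invoke $f_0$; the cross-term bookkeeping in the last step is routine.
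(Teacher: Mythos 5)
Your proposal is correct and follows essentially the same route as the paper's proof: decompose the evolved operators into their $Z$- and $\bar{Z}$-localizations plus remainders $\Delta_A,\Delta_B$ bounded by $c_T$ via the Haar-averaging argument, bound the three cross connected correlators by $2\norm{P}\norm{Q}$, and apply the hypothesis on $\rho$ to the localized term. Your explicit handling of the $T$ versus $T^\dagger$ conjugation (which the paper glosses over by ``switching into the Schr\"odinger picture'') and of the symmetry of $c_T$ in its arguments is a welcome extra precision, not a departure.
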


\begin{proof}
Define $\Delta_A=A_T-\left[A_T\right]_{Z}$ and $\Delta_B=B_T-\left[B_T\right]_{\bar{Z}}$. Then we have
	\begin{equation}
	\begin{aligned}
	\modulus{\brak{A_TB_T}_c}&=\modulus{\brak{\left(\Delta_A+\left[A_T\right]_{Z}\right)\left(\Delta_B+\left[B_T\right]_{\bar{Z}}\right)}_c}\\
	&\leq\modulus{\brak{\left[A_T\right]_{Z}\left[B_T\right]_{\bar{Z}}}_c}+\modulus{\brak{\left[A_T\right]_{Z}\Delta_B}_c}\\
	&\hspace{10pt}+\modulus{\brak{\Delta_A\left[B_T\right]_{\bar{Z}}}_c}+\modulus{\brak{\Delta_A\Delta_B}_c}\\
	&\leq\modulus{\brak{\left[A_T\right]_{Z}\left[B_T\right]_{\bar{Z}}}_c}+2\norm{\Delta_B}\\
	&\hspace{10pt}+2\norm{\Delta_A}+2\norm{\Delta_A}\norm{\Delta_B}\\
	&\leq f_0(Z,\bar{Z})+2c_T(Y,Z)+2c_T(X,\bar{Z})\\
	&\hspace{10pt}+2c_T(X,\bar{Z})c_T(Y,Z)
	\end{aligned}
	\end{equation}
	where the last inequality was established in the proof of a previous theorem. Now we can switch into the Schr\"odinger picture, i.e., interpret this as a bound on $\modulus{\brak{AB}_c}$ in the state $\rho_T$. This is the stated bound, Eq. \ref{corrbound}.
\end{proof}
Note that the appearance of the subset $Z$ in Eq. \ref{corrbound} accounts for correlations between subregions of the spin system that may lead to correlations between regions $X$ and $Y$ after application of $T$.

Here again we have elucidated the relationship between algebraic and operational features of the operator $T$, this time to show that a small value of $c_T(X,Y)$ implies low fidelity of the state of the joint system $XY$ with any maximally entangled state of the two subsystems. This allows us to bound the entanglement-generating capabilities of $T$.

\section{4. Lieb-Robinson Bounds}
In this section, we restrict to the case in which $T$ is the time-evolution operator for a system built from a graph $G=(V,E)$ as described above, generated by a local Hamiltonian of the form
\begin{equation}
H(t)=\sum_{v\in V}\Phi_1(v,t)+\sum_{e\in E}\Phi_2(e,t)\label{LocH}
\end{equation}
where the graph-Hamiltonian dictionary is:

\begin{center}
	\begin{tabular}{cccc}
		Graph & & & Hamiltonian\\\hline
		$v\in V$ & & & $\Phi_1(v,t)\in\mathcal{B}(\mathcal{H}_v)$, $\norm{\Phi_1(v,t)}\leq B$\\
		$e\in E$ & & & $\Phi_2(e,t)\in\mathcal{B}(\mathcal{H}_e)$, $\norm{\Phi_2(e,t)}\leq J$
	\end{tabular}
\end{center}
We can now derive the Lieb-Robinson velocities for simple graphs that model cases of experimental importance. With these results, we shall convert the bounds Eq. \ref{Fbound} and Eq. \ref{Sbound} into concrete speed limits on quantum information processing tasks in local spin systems.

If the vertex spaces represent spin degrees of freedom, the $\Phi_1$ operators represent magnetic fields and the $\Phi_2$ operators nearest-neighbor spin-spin couplings. Many control problems assume that the $\Phi_2$ operators are time-independent and the $\Phi_1$ vary in time. This model captures, for example, the setting in which the couplings between spins are fixed but an experimentalist is free to vary some applied fields.

The Lieb-Robinson bound \cite{Lieb1972} demonstrates that a local Hamiltonian in the above sense implies a dynamical locality in the space of operators on the full graph Hilbert space. For our purposes a convenient statement is as follows:
\begin{theorem}
	Let $H$ be a local Hamiltonian of the form in Eq. \ref{LocH} for some graph $G=(V,E)$ with the correspondences described above. Then if $X,Y\subset V$ disjoint and $A$, $B$ are operators on $X$ and $Y$, respectively,
	\begin{equation}
	\begin{aligned}
	\norm{\comm{A(t)}{B}}&\leq 2\norm{A}\norm{B}\sum_{n=1}^\infty\frac{\left(2Jt\right)^n}{n!}N(n),
	\end{aligned}\label{LR}
	\end{equation}
	where $A(t)=U_t^\dagger AU_t$ for $U_t$ the time-evolution operator generated by $H(t)$ and $N(n)$ is the number of paths of length $n$ from $X$ to $Y$.
\end{theorem}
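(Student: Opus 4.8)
The plan is to prove the Lieb-Robinson bound Eq. \ref{LR} by the standard iterative/Dyson-series argument, adapted to the graph-Hamiltonian dictionary. First I would pass to the interaction picture with respect to the on-site terms $\sum_v\Phi_1(v,t)$: since each $\Phi_1(v,t)$ acts on a single vertex, conjugation by the corresponding unitary does not change the support of any operator, so it suffices to prove the bound for a Hamiltonian consisting only of the edge terms $\Phi_2(e,t)$. (This is why only $J$, and not $B$, appears on the right-hand side of Eq. \ref{LR}.) Define $f(t)=\norm{\comm{A(t)}{B}}$, where now $A(t)$ is Heisenberg-evolved under the edge-only Hamiltonian.

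Next I would differentiate. Writing $\dot A(t)=i\comm{H(t)}{A(t)}=i\sum_{e\in E}\comm{\Phi_2(e,t)}{A(t)}$, one gets
\begin{equation}
\begin{aligned}
\frac{d}{dt}\comm{A(t)}{B}&=i\sum_{e}\comm{\comm{\Phi_2(e,t)}{A(t)}}{B}\\
&=i\sum_{e}\comm{\Phi_2(e,t)}{\comm{A(t)}{B}}+i\sum_{e}\comm{A(t)}{\comm{\Phi_2(e,t)}{B}}
\end{aligned}
\end{equation}
using the Jacobi identity. The first sum is a commutator of $\comm{A(t)}{B}$ with a Hermitian operator, hence generates a norm-preserving flow and can be removed by conjugating with the appropriate propagator; it does not affect $f(t)$. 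For the second sum, I would bound $\norm{\comm{A(t)}{\comm{\Phi_2(e,t)}{B}}}\le 2\norm{\Phi_2(e,t)}\,\norm{\comm{A(t)}{B_e}}$-type terms, but more carefully: only edges $e$ that touch the support of $B$ contribute (since $\comm{\Phi_2(e,t)}{B}=0$ otherwise), and $\comm{\Phi_2(e,t)}{B}$ is supported on $\mathrm{supp}(B)\cup e$. Iterating Duhamel's formula then gives, after $n$ steps, a sum over chains of edges $e_1,\dots,e_n$ with $e_1$ touching $Y$, each consecutive $e_{k+1}$ touching the growing support, and the last reaching $X$ so that the base-case commutator $\comm{A}{\,\cdot\,}$ is nonzero; each step contributes a factor $2\norm{\Phi_2}\le 2J$ and a time integral giving $t^n/n!$. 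Such chains of edges are exactly paths of length $n$ from $Y$ to $X$ in $G$, of which there are $N(n)$, and the overall prefactor $2\norm{A}\norm{B}$ comes from the trivial bound $f(0)\le 2\norm{A}\norm{B}$ on the innermost commutator.

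Assembling these pieces yields the integral inequality $f(t)\le 2\norm{A}\norm{B}\sum_{n\ge1}\frac{(2Jt)^n}{n!}N(n)$ directly, or equivalently one sets up $f(t)\le f(0)+2J\int_0^t\!\sum_{e\cap Y\ne\emptyset}\norm{\comm{A(s)}{B^{(e)}}}\,ds$ and solves the resulting recursion by induction on $n$. The main obstacle I expect is the bookkeeping of supports: one must carefully track how $\mathrm{supp}(B)$ grows as successive edge-commutators are taken, argue that a surviving term requires the accumulated support to eventually intersect $X$, and verify that the combinatorial count of such nested edge sequences is precisely $N(n)$ (paths, with repetition allowed, counted with multiplicity). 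The interaction-picture reduction and the Jacobi-identity splitting are routine; the path-counting identification is where care is needed, and it is worth noting that using a cruder bound (e.g.\ in terms of $\max$-degree and graph distance) would give a weaker but simpler statement.
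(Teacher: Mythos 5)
Your overall strategy---rotate away the on-site terms, differentiate the commutator, split with the Jacobi identity, discard the norm-preserving part, and iterate Duhamel---is the right family of argument, and the first reductions are fine (the paper instead absorbs the on-site terms into the ``local'' Hamiltonian of the Nachtergaele--Sims construction in Appendix B, which achieves the same thing). The genuine gap is at the combinatorial step you yourself flag. In your recursion the iterated object is $\comm{A(t)}{B'}$ with $B'=\comm{\Phi_2(e_1,\cdot)}{B}$ supported on $\mathrm{supp}(B)\cup e_1$, so the next step must sum over \emph{all} edges meeting the accumulated support, including edges interior to the cluster already grown. These edge-addition sequences are not walks: on a chain the cluster after $k$ steps is an interval met by up to $k+2$ edges, so the number of surviving length-$n$ sequences grows factorially in $n-R$ rather than like the walk count $\binom{n}{(n+R)/2}$. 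That factorial growth cancels the $1/n!$ from the nested time integrals and the resulting series fails to converge to anything useful once $2Jt\gtrsim 1$, precisely the regime the theorem addresses. So the assertion that ``the combinatorial count of such nested edge sequences is precisely $N(n)$'' is false for the recursion as you have set it up.

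The fix---and what the proof the paper invokes actually does---is to arrange the iteration so that each step remembers only the single most recent edge. Define $C_B(Z,t)=\sup_{O\in\mathcal{B}(\mathcal{H}_Z)}\norm{\comm{\tau_t(O)}{B}}/\norm{O}$ and derive
\begin{equation}
C_B(X,t)\leq C_B(X,0)+2J\sum_{e\in S(X)}\int_0^t C_B(e,s)\,ds,
\end{equation}
where $S(X)$ is the set of edges crossing the boundary of $X$; iterating this produces sequences of edges each meeting only its predecessor, i.e.\ genuine walks counted by $N(n)$. Obtaining this inequality is exactly the purpose of the $T_t(A)=\tau_t\left(\tau_{-t}^{\text{loc}}(A)\right)$ device in Appendix B: comparing the full evolution with the evolution stripped of the $X$--$\bar{X}$ couplings makes the derivative produce $\comm{\tau_{t\leftarrow 0}(\Phi(Z,t))}{T_t(A)}$ with $\Phi(Z,t)$ supported on a single boundary edge $Z$, and the estimate $\norm{\comm{T_t(A)}{\comm{\tau_{t\leftarrow 0}(\Phi(Z,t))}{B}}}\leq 2\norm{A}\norm{\Phi(Z,t)}\,C_B(Z,t)$ closes the recursion on single edges. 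A secondary issue: for a time-dependent Hamiltonian the Heisenberg equation reads $\dot{A}(t)=i\comm{\tau_t(H(t))}{A(t)}$, and $\tau_t(\Phi_2(e,t))$ is no longer supported on $e$, so your localization ``$\comm{\Phi_2(e,t)}{B}=0$ unless $e$ meets $\mathrm{supp}(B)$'' requires the Schr\"odinger-picture operator to land in the correct slot of the double commutator; justifying this in the time-dependent setting is the entire content of the explicit computation in Appendix B.
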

\noindent A more general form of this theorem was presented originally in \cite{Lieb1972}. An easier proof for time-independent Hamiltonian is given in \cite{Nachtergaele2010} and may be readily extended to time-dependent Hamiltonians (see Appendix B).

To find bounds on quantum state transfer and entanglement generation, we must find $c_{t}(X,Y)$ satisfying Eq. \ref{cdef}. This comes down to counting the number $N(n)$ of paths of length $n$ starting in $X$ and ending in $Y$. Suppose that $G$ is an arbitrary graph with maximum vertex degree $d$, and let $\text{dist}(X,Y)=R$, the minimum graph distance between vertices in subsets $X$ and $Y$. For $n<R$, $N(n)=0$. Otherwise, we have $N(n)\leq\modulus{X}d^n$. Then as in \cite{Richerme2014},
\begin{equation}
\begin{aligned}
c_{t}(X,Y)&\leq 2\sum_{n=R}^\infty\frac{\left(2Jt\right)^n}{n!}N(n)\\
&\leq 2 \modulus{X}e^{-R}\sum_{n=0}^\infty\frac{\left(2eJdt\right)^n}{n!}\\
&\leq 2\modulus{X}e^{2edJt-R}.
\label{Fbound_genG_vol}
\end{aligned}
\end{equation}

For $G$ a linear graph  ($d=2$), with $\text{diam}(X)<R$ and similarly for $Y$, notice that we can collapse all the vertices of $X$ (and any vertices surrounded by $X$) into a single vertex $v_X$ with associated Hilbert space $\mathcal{H}_{v_X}=\mathcal{H}_X$, and similarly for $Y$, to obtain a new graph $G'$ describing the same system but with the subsystems of interest now single vertices. This does not change the maximum strength of the edge interactions, nor does it change the degree of the graph, so we can assume without loss of generality that $\modulus{X}=\modulus{Y}=1$. Then we have $N(n)\leq C(n,\frac{1}{2}(n+R))$, so that
\begin{equation}
\begin{aligned}
c_{t}&\leq 2 \sum_{n=R}^\infty\frac{\left(2J t\right)^n}{n!}\binom{n}{\frac{n+R}{2}}\\
&=2I_{R}(4Jt)\label{Fbound_lin}
\end{aligned}
\end{equation}
where $I_\nu(x)$ is the modified Bessel function of the first kind.

The exponential form of the bound (last line of Eq. \ref{Fbound_genG_vol}) lends itself to interpretation as a speed limit $v=2edJ$ for arbitrary graph structure. The bound Eq. \ref{Fbound_lin} for the $d=2$ linear graph is not as convenient, but graphically (Fig. \ref{besselpic}) it can be seen to correspond to a speed limit with speed $6J$, an improvement over the $d=2$ case of the general limit in Eq. \ref{Fbound_genG_vol}.

\begin{figure}[h!]
	\centering
\includegraphics[width=.45\textwidth]{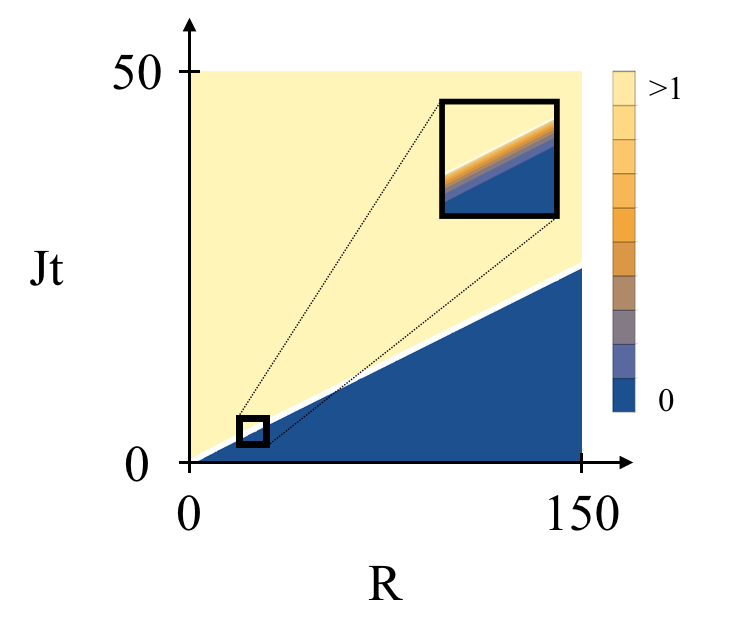}
\caption{Contour plot of $I_R(4Jt)$, the function controlling the bound on the fidelity of quantum state transfer in a spin chain (see Eq. \ref{Fbound}, Eq. \ref{Fbound_lin}). The function is truncated above 1, the bound $c\leq 2$ always holds by the triangle inequality.  A very sharp light cone with speed $6J$ emerges.}\label{besselpic}
\end{figure}

It is easy to insert these bounds into the bound Eq. \ref{Fbound} on quantum state transfer. We then see that the fidelity of quantum state transfer is exponentially suppressed outside a light cone defined by the Lieb-Robinson speed, i.e. for $R>6Jt$ in the case of the spin chain.

 To obtain an illuminating speed limit for entanglement generation, note that the time-dependent term in Eq. \ref{corrbound} is proportional to
\begin{equation}
\begin{aligned}
\left(c_T(X,\bar{Z})+1\right)\left(c_T(Y,Z)+1\right).
\end{aligned}
\end{equation}
If as above $c_T(X,Y)$ is a function of $R$ and $t$, i.e $c_T(X,Y)=g(R,t)$, let $Z$ be such that $\text{dist}(X,\bar{Z})=\text{dist}(Z,Y)=R/2$. Then this term can be bounded by
\begin{equation}
\begin{aligned}
\left(g(R/2,t)+1\right)^2.
\end{aligned}
\end{equation}
Recalling Eq. \ref{Fbound_genG_vol}, we see that the maximum speed associated to entanglement generation is twice that associated to quantum state transfer. This is reminiscent of the picture of entanglement spread via diverging Bell pairs proposed in \cite{Calabrese2005}.

\section*{5. Discussion}
As an example of the use of these bounds for determining the limits of control of quantum information tasks we consider the problem of quantum state transfer over a finite Heisenberg spin chain, i.e., a chain of length $L$ with nearest-neighbor couplings $-J\vec{\sigma}_n\cdot\vec{\sigma}_{n+1}/2$, under application of arbitrary local magnetic fields, i.e. 1-local control terms $B_n(t)\sigma_n^z$.  The two-body interaction terms have norm $J$, so application of Eq. \ref{Fbound_lin} yields a maximum transfer speed $v=6J$ in the presence of arbitrary magnetic fields, including time-dependent fields with local spatial variation along the chain.  In a previous study, Murphy et al. \cite{Murphy2010} searched numerically for optimal controls from a set of time-dependent magnetic fields, that would transfer a single spin state between the two ends of the chain. For chains of length $L$, they found control pulses that achieved high fidelity ($\leq \sim 10^{-4}$) quantum state transfer from one end of the chain to the other in times greater than $t_*\approx L/2J$ \cite{Murphy2010}.

The achievable fidelity was found to fall off rapidly for shorter transfer times, indicating a numerically extracted speed limit of $v\approx 2J$.  This is consistent with our result for quantum state transfer, which applies to a general system with \textit{any} 2-local interactions and 1-local control fields.  

It is interesting to consider the meaning of the gap between our bound $v=6J$ for general 2-local spin chain Hamiltonians and the numerically obtained maximal velocity. The latter was obtained for simulations restricted to the single-particle subspace, where the dynamics are amenable to treatment in terms of a group velocity.  Indeed, the value $v=2J$ is the maximum value of the group velocity for this system, which may be obtained from analytic solution of the Heisenberg Hamiltonian restricted to the single excitation subspace.

It has been previously noted that there can be a large gap between the group velocity, an intrinsic single-particle dynamical metric relevant to propagation of local excitations, and Lieb-Robinson bounds \cite{Richert2011, Richerme2014}. The latter hold for arbitrary excitations not restricted to single spins and thus are relevant also to more general quench dynamics, in particular to the global quenches that have been rationalized in terms of creation and subsequent interference of multiple entangled quasi-particle pairs \cite{Calabrese2006}.   Experiments with ion traps illustrate this distinction for local~\cite{Jurcevic2014} and global~\cite{Richerme2014} quenches of a finite chain of ions emulating the $XY$ model with approximately nearest neighbor interactions.  Specifically, Ref.~\cite{Jurcevic2014} indicates single excitation propagation velocities equal to the group velocity (Fig. 4e in~\cite{Jurcevic2014}) for interactions scaling as $1/r^{1.41}$,
while Ref.~\cite{Richerme2014} shows significantly higher velocities for propagation of correlations under global quenches (Fig. 3l in~\cite{Richerme2014}) although extraction of a velocity is problematic here since the system was not as well located in the nearest neighbor regime. For longer range power law ($1/r^{\alpha}$) interactions, recent theoretical work has shown that depending on the relative magnitude of the power law $\alpha$ and the lattice dimensionality $D$, generalizations of the Lieb-Robinson bounds can also allow finite information propagation velocities \cite{Hastings2006, eisert2013breakdown, gong2014persistence, foss2015nearly}. In this context it is interesting that Ref. \cite{eisert2013breakdown} also noted a striking gap between the generalized Lieb-Robinson bound and considerably smaller actual propagation times for a long range many-body Hamiltonian.

A second example of correlations propagating at a velocity greater than the single excitation group velocity following a global quench can be found in experiments with trapped atoms under conditions of restricted occupancy.  Although the Lieb-Robinson bounds do not apply in general to bosonic systems because the Hamiltonians are unbounded~\cite{Eisert2009}, a finite Hamiltonian norm is nevertheless obtained if the site occupancy is restricted to a fixed finite value.  The experiments in Ref.~\cite{Cheneau2012} fall into this category, restricting site occupancy to two or fewer atoms within an emulation of the Bose-Hubbard model in a finite one-dimensional chain of atoms trapped in an optical lattice. These experiments, and associated calculations in Ref.~\cite{Barmettler2012}, also show propagation velocities for correlation functions that are intermediate between group velocity and the Lieb-Robinson bound.

While qualitative, these recent experiments nevertheless indicate that there is a significant unexplored range of complex dynamics for increasingly efficient and fast quantum state transfer with non-local control fields.  It is thus an interesting challenge for engineering of spatio-temporal control fields to determine whether our commutator bound on quantum state transfer can be achieved.

Our results are related to those of Bravyi and co-workers in \cite{Bravyi2006}, but we have focused here more on constraints on the performance of quantum information processing tasks. For instance, whereas \cite{Bravyi2006} shows that a state obeying an area law for entanglement will evolve in finite time to another area law state under a local Hamiltonian, we examine the rate at which entanglement may form between two specific subsystems, not necessarily bipartitioning the entire graph. In principle, the bound of Bravyi et al. on the classical channel capacity between two subregions separated in space and time and linked by evolution under a local quantum Hamiltonian could be used to bound quantum state transfer times. Our formulation in terms of fidelities provides a direct and natural language for  general analysis of quantum information processing tasks.

In this work, we have used algebraic methods to prove the existence of fundamental limits on rates of quantum state transfer and entanglement generation. These limits could prove useful for understanding the ultimate limits of physical realizations of quantum information processing. Application to quantum spin systems with local interactions and control fields yielded Lieb-Robinson type bounds for quantum state transfer and entanglement generation. Comparison with the results of numerical optimal control calculations \cite{Murphy2010} for such systems suggests that unexplored regimes of quantum dynamics may offer new opportunities for enhanced performance of essential quantum processing tasks.\\
\\
\textit{Acknowledgments} We thank Pietro Silvi, Tommaso Calarco, Simone Montangero, and Daniel Burgarth for stimulating discussions. J.E. was supported by the Department of Defense (DoD) through the National Defense Science \& Engineering Graduate Fellowship (NDSEG) Program. This work was supported by Laboratory Directed Research and Development (LDRD) funding from Lawrence Berkeley National Laboratory, provided by the U.S. Department of Energy, Office of Science under Contract No. DE-AC02-05CH11231.

\bibliography{references}{}
\clearpage
\onecolumngrid
\appendix
\section{Appendix A: Orthogonalization Times and Quantum Information Processing}
A fundamental question in quantum mechanics is how fast a system can evolve from some initial state to an orthogonal one. This has been addressed before, for example in \cite{Mandelstam1991} and \cite{Margolus1998}. Here we use this approach to obtain bounds on the rates at which classical information, quantum information, and entanglement can be shared between the two parts of a bipartite system under generic Hamiltonian evolution (not necessarily local). Note that in general, these rates are much larger than those achievable with local Hamiltonians, as in the body of the paper.

We begin with a lemma that will allow us to construct explicitly the state that orthogonalizes the fastest under a given Hamiltonian evolution. This inequality then gives us a way to bound several interesting minimum times for any quantum system in a pure state.
\begin{lemma}
	Let $E_1\leq E_2\leq\cdots\leq E_N$ with $E_N-E_1\leq \pi$. Then for $M_{ij}=\cos(E_i-E_j)$, $r_i=(\delta_{i,1}+\delta_{i,N})/2$ minimizes $r^TMr$ subject to the constraints $r_i\geq 0$ and $\sum_ir_i=1$.
\end{lemma}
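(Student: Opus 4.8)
The plan is to recognize that $M$ is positive semidefinite of rank at most two, so that $r \mapsto r^\top M r$ is convex on the probability simplex and it suffices to verify first-order optimality at the claimed point $r^*$ with $r^*_i = \tfrac12(\delta_{i,1}+\delta_{i,N})$. The key identity is $\cos(E_i-E_j) = \operatorname{Re}\big(e^{iE_i}\overline{e^{iE_j}}\big)$, which gives $M = cc^\top + ss^\top$ with $c_i=\cos E_i$ and $s_i=\sin E_i$, hence $M \succeq 0$ and
\begin{equation}
r^\top M r = \modulus{\sum_i r_i e^{iE_i}}^2 .
\end{equation}
Geometrically this is the squared distance from the origin to a point in the convex hull of the unit-modulus numbers $e^{iE_1},\dots,e^{iE_N}$, all of which lie on a circular arc of angular width $E_N-E_1\le\pi$; the lemma then asserts that the closest such point is the midpoint of the chord joining the two extreme points, which belongs to the hull precisely because the arc is at most a semicircle.

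Carrying this out algebraically, I would first set $\theta_0=\tfrac12(E_1+E_N)$ and $\phi=\tfrac12(E_N-E_1)\in[0,\pi/2]$ and record that $q(r^*)=\tfrac12(e^{iE_1}+e^{iE_N}) = e^{i\theta_0}\cos\phi$, so the candidate optimal value is $(r^*)^\top M r^* = \cos^2\phi$. Next, using the sum-to-product formula, I would compute $(Mr^*)_k = \tfrac12\big(\cos(E_k-E_1)+\cos(E_k-E_N)\big) = \cos(E_k-\theta_0)\cos\phi$. For $k\in\{1,N\}$ this equals $\cos^2\phi$, so the complementary-slackness condition holds on the support of $r^*$; for every other $k$, the ordering $E_1\le E_k\le E_N$ forces $\modulus{E_k-\theta_0}\le\phi\le\pi/2$, and since $\cos$ is decreasing on $[0,\pi/2]$ this yields $(Mr^*)_k\ge\cos^2\phi$. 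Finally, convexity of the quadratic closes the argument: for any simplex point $r$,
\begin{equation}
r^\top M r \geq (r^*)^\top M r^* + 2(Mr^*)^\top(r-r^*) = \cos^2\phi + 2\sum_k (Mr^*)_k\,(r_k - r^*_k),
\end{equation}
and the sum is nonnegative because $(Mr^*)_k\ge\cos^2\phi$, $r_k\ge 0$, $\sum_k r_k = 1$, while $(Mr^*)_k=\cos^2\phi$ wherever $r^*_k\neq 0$; hence $r^\top M r\ge\cos^2\phi=(r^*)^\top M r^*$ for every admissible $r$.

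The one place the hypothesis $E_N-E_1\le\pi$ is used, and the main (essentially the only) obstacle, is the monotonicity step: it is exactly what guarantees $\phi\le\pi/2$, so that $\cos$ is decreasing on the interval $[-\phi,\phi]$ containing every $E_k-\theta_0$ and the estimate $(Mr^*)_k\ge\cos^2\phi$ is valid. If the spectral spread exceeded $\pi$ this inequality could reverse for intermediate $E_k$, equivalently the chord midpoint would leave the convex hull, and $r^*$ would no longer minimize. The remaining ingredients --- the PSD factorization, the sum-to-product identity, and the convexity inequality --- are routine.
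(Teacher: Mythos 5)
Your proof is correct, and at its core it is the same strategy as the paper's: verify first-order optimality of the candidate point on the simplex and then invoke convexity of the quadratic form to promote the local minimum to a global one. The differences are in execution, and two of them are worth noting. First, your justification of convexity via the rank-two factorization $M = cc^\top + ss^\top$ (equivalently $r^\top M r = \bigl|\sum_i r_i e^{iE_i}\bigr|^2$) is the right reason $M\succeq 0$; the paper instead asserts convexity ``since all the $M_{jk}$ are non-negative,'' which is not a valid inference (entrywise non-negativity of a symmetric matrix does not imply positive semidefiniteness, e.g.\ the off-diagonal matrix $M_{12}=M_{21}=1$, $M_{11}=M_{22}=0$), so your argument actually repairs a gap in the published proof. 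Second, your stationarity condition $(Mr^*)_k \ge \cos^2\phi$ with equality on the support of $r^*$ is exactly the paper's directional-derivative inequality $M_{1k}+M_{kN}\ge 1+M_{1N}$ in disguise; you prove it by the sum-to-product identity plus monotonicity of $\cos$ on $[0,\pi/2]$, while the paper derives the equivalent inequality $\cos x + \cos y - 1 \ge \cos(x+y)$ for $x,y,x+y\in[0,\pi]$ via a tangent half-angle manipulation. Both routes use the hypothesis $E_N-E_1\le\pi$ at precisely the same point. Your geometric reading --- the minimizer is the chord midpoint of an arc of width at most $\pi$, hence still in the convex hull --- is a clarifying addition not present in the paper.
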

\begin{proof}
	Define the function $f(r)$ by
	\begin{equation}
	\begin{aligned}
	f(r)&=\sum_{ij}r_ir_jM_{ij}.
	\end{aligned}
	\end{equation}
	Then the first and second derivatives of $f$ are given by
	\begin{equation}
	\begin{aligned}
	\frac{\partial f}{\partial r_k}&=2\sum_jr_jM_{jk}\hspace{20pt}\frac{\partial^2f}{\partial r_j\partial. r_k}=2M_{jk}
	\end{aligned}
	\end{equation}
	From the value of $r$ in the statement of the lemma, we can move in the direction $\pm(\hat{r}_1-\hat{r}_N)$ and still satisfy the constraints. The first derivative in this direction is proportional to
	\begin{equation}
	\begin{aligned}
	\left(\frac{\partial}{\partial r_1}-\frac{\partial}{\partial r_N}\right)f&=2\sum_jr_j(M_{j1}-M_{jN})=M_{11}-M_{1N}+M_{N1}-M_{NN}=0
	\end{aligned}
	\end{equation}
	and the second derivative to
	\begin{equation}
	M_{11}-M_{1N}-M_{N1}+M_{NN}=2-M_{1N}-M_{N1}>0.
	\end{equation}
	We can also move in the direction $2\hat{r}_k-\hat{r}_1-\hat{r}_N$. In this direction, the first derivative is proportional to
	\begin{equation}
	\begin{aligned}
	2\frac{\partial}{\partial r_k}-\frac{\partial}{\partial r_1}-\frac{\partial}{\partial r_N}&=2\sum_jr_j\left(2M_{jk}-M_{j1}-M_{jN}\right)\\
	&=\left(2M_{1k}-M_{11}-M_{1N}+2M_{Nk}-M_{N1}-M_{NN}\right)=2\left(M_{1k}+M_{kN}-1-M_{1N}\right)\geq 0,
	\end{aligned}
	\end{equation}
	where we used that for $x,y,x+y\in\left[0,\pi\right]$,
	\begin{equation}
	\begin{aligned}
	\tan\left(\frac{y}{2}\right)&\leq\tan\left( \frac{\pi}{2}-\frac{x}{2}\right)=\cot\left(\frac{x}{2}\right)\\
	1&\geq\tan\left(\frac{x}{2}\right)\tan\left(\frac{y}{2}\right)=\left(\frac{\cos(x)-1}{\sin(x)}\right)\left(\frac{\cos(y)-1}{\sin(y)}\right)\\
	\sin(x)\sin(y)&\geq\left(\cos(x)-1\right)\left(\cos(y)-1\right)\\
	\cos(x)+\cos(y)-1&\geq\cos(x)\cos(y)-\sin(x)\sin(y)=\cos(x+y)
	\end{aligned}
	\end{equation}
	This establishes that $r_i=(\delta_{i,1}+\delta_{i,N})/2$ is a local minimum of the constrained optimization problem. Since all the $M_{jk}$ are non-negative, it is simple to verify that $f$ is convex on the region over which we're optimizing, where the $r_i$ are non-negative. Therefore the local minimum is a global minimum.
\end{proof}
\noindent\textbf{Whole System Orthogonalization:} Now we are in a position to find bounds on rates of orthogonalization. Let $\psi$ be a pure state of a $d$-dimensional system evolving under a Hamiltonian $H$ with energies $E_k$. Defining $\Delta_\text{max}=E_\text{max}-E_\text{min}$, we see that for times $t$ such that $\Delta_{max}t\leq \pi/2$,
\begin{equation}
\begin{aligned}
\modulus{\brak{\psi(t),\psi}}^2&=\sum_{j,k=0}^{d-1}r_jr_ke^{i(E_j-E_k)t}=\sum_{j,k}r_jr_k\cos(\Delta_{jk}t)\geq \frac{1}{2}\left(\cos(\Delta_\text{max}t)+1\right)=\cos^2\left(\frac{1}{2}\Delta_\text{max}t\right)
\end{aligned}
\end{equation}
with the eigenbasis basis chosen so that the $r_k$ are real. The inequality follows from the lemma proven above. Then the Bures angle between the initial and time $t$ states is bounded by \cite{Bosyk2014}
\begin{equation}
\begin{aligned}
\theta(\psi(t),\psi)&=\arccos\modulus{\brak{\psi(t),\psi}}\leq\frac{1}{2}\Delta_\text{max}t,
\end{aligned}
\end{equation}
resulting in an orthogonalization time $t=\pi/\Delta_\text{max}$. Note that the presence of $\Delta_\text{max}$ in the bound reflects the fact that entangled states are useful for estimation of parameters corresponding to classical fields (1-body operators). An entangled state takes advantage of the large $\Delta_\text{max}$ of the sum of many local operators, while a product state does not. In terms of fidelities, this is the difference between $(\cos\theta)^n$ and $\cos(n\theta)$.\\
\\
\textbf{Quantum Information Transfer:} Consider a bipartite system  in a pure state $\psi$ with $\tr_1\left(\psi\psi^\dagger\right)=\rho^{(2)}$. Then since the Bures angle is non-decreasing under partial trace, we also have
\begin{equation}
\begin{aligned}
\theta(\rho^{(2)}(t),\rho^{(2)})&\leq\frac{1}{2}\Delta_\text{max}t.
\end{aligned}
\end{equation}
We define $t_*^Q=\pi/\Delta_\text{max}$. This is the minimum time required after an operation on subsystem 1 for subsystem 2 to evolve to an orthogonal state. Since in order to send quantum information, the sender should be able to cause the receiver's system to evolve to any state, this is a reasonable measure of the minimum time to send a qubit.\\
\\
\textbf{Classical Information Transfer:} Now let $\psi_A(t)=e^{-iHt}A\psi$ and $\psi_B(t)=e^{-iHt}B\psi$ for some $A,B\in\mathcal{B}(\mathcal{H})$. By the triangle inequality:
\begin{equation}
\begin{aligned}
\theta(\psi_A(t),\psi_B(t))&\leq \theta(\psi_A(t),A\psi)+\theta(A\psi,B\psi)+\theta(B\psi,\psi_B(t))\\
&\leq \theta(A\psi,B\psi)+\Delta_\text{max}t.
\end{aligned}
\end{equation}
If $\theta(A\psi,B\psi)=\pi/2$, then this bound is trivial. However, consider the situation in which the system is bipartite and $A,B$ act as the identity on subsystem 2. Then
\begin{equation}
\begin{aligned}
\theta_B(\rho^{(2)}_A(t),\rho^{(2)}_B(t))&\leq\theta_B(\rho^{(2)}_A(t),\rho^{(2)}_A)+\theta_B(\rho^{(2)}_A,\rho^{(2)}_B)+\theta_B(\rho^{(2)}_B,\rho^{(2)}_B(t))\\
&\leq\theta_B(\psi_A(t),\psi_A)+\theta_B(\rho^{(2)}_A,\rho^{(2)}_B)+\theta_B(\psi_B,\psi_B(t))\\
&\leq\theta_B(\rho^{(2)}_A,\rho^{(2)}_B)+\Delta_{\text{max}}t.
\end{aligned}
\end{equation}
Using that $\theta_B(\rho^{(2)}_A,\rho^{(2)}_B)=0$, we find
\begin{equation}
\begin{aligned}
\theta_B(\rho^{(2)}_A(t),\rho^{(2)}_B(t))&\leq\Delta_{\text{max}}t.
\end{aligned}
\end{equation}
We define $t_*^C=\pi/2\Delta_\text{max}$. Since this is the minimum time required for the reduced states on subsystem 2 conditioned on the choice of one of two operations on subsystem 1 (a classical random variable) to become perfectly distinguishable, this is a reasonable measure of the minimum time to send a classical bit.\\
\\
\textbf{Entanglement Generation:} Let $\psi$ be a product state of a bipartite $d^2$-dimensional system and let $\phi$ be maximally entangled. Using the Schmidt basis for $\phi$:
\begin{equation}
\begin{aligned}
\modulus{\brak{\phi,\psi}}&=\modulus{\left(\frac{1}{\sqrt{d}}\sum_k\bra{kk}\right)\left(\sum_{ij}\alpha_i\beta_j\ket{ij}\right)}=\frac{1}{\sqrt{d}}\modulus{\sum_{k}\alpha_k\beta_k}\leq \frac{1}{\sqrt{d}}.
\end{aligned}
\end{equation}
Then $\theta_B(\phi,\psi)\geq\arccos d^{-1/2}$ so that the minimum time to generate a maximally entangled state from a product state is:
\begin{equation}
t_*^E=\frac{2}{\Delta_\text{max}}\arccos d^{-1/2}.
\end{equation}

These bounds can all be achieved, as may be demonstrated constructively using e.g. a system of two qubits evolving under the Hamiltonian $H=Z_1Z_2$ from the initial state $\ket{0+}$ with operators $A=X_1$, $B=\mathbb{1}$ (for the information transfer times) or initial state $\ket{++}$ (for entanglement time). Thus we have
\begin{equation}
t_*^C\leq t_*^E<t_*^Q=2t_*^C.
\end{equation}

In this sense, classical information can be sent from one part of a bipartite system to another twice as fast as quantum information. This leads to an interpretation of teleportation as a way to beat a quantum speed limit using entanglement as a resource if we imagine using the quantum dynamics of the coupled systems to do the necessary classical communication. This interpretation is bolstered by the fact that first generating entanglement and then performing teleportation takes at least as long as the minimum quantum transfer time.

It is not clear from this analysis that the distinction between transfer times for classical and quantum information would persist in the situation where the interaction between the sender and the receiver is mediated by intervening subsystems, as is the case in the spin chain. In the two-qubit example give above, the halving of the transfer time is directly related to the existence of what might be termed a ``local time-reversal operator", i.e. an operator acting non-trivially only on subsystem 1 that anti-commutes with the Hamiltonian. Such an operator cannot exist in the case of the spin chain with local interactions.

\section{Appendix B: Lieb-Robinson Bounds with Time-Dependent Hamiltonians}
A very clear proof of the Lieb-Robinson bound, in the form given in Eq. \ref{LR}, may be found in \cite{Nachtergaele2010} for the case of time-independent Hamiltonians. For application to quantum control, we need to extend the result to the time-dependent setting. This may be accomplished by a minor modification of the proof given in that work. We present only the modified step here.\\
\\
Eq. 2.27 of \cite{Nachtergaele2010} defines the function
\begin{equation}
f(t)=\comm{T_t(A)}{B}=\comm{\tau_t\left(\tau_{-t}^\text{loc}(A)\right)}{B}
\end{equation}
where $A$ and $B$ are operators on regions $X$ and $Y$ of a local spin system, $\tau_t$ is the time-evolution superoperator corresponding to a local Hamiltonian $H$, and $\tau_t^\text{loc}$ is the time-evolution superoperator corresponding to the Hamiltonian obtained by getting rid of all terms in $H$ that couple $X$ and its complement $\bar{X}$. In particular, note that $\tau_t^{\text{loc}}$ has $\mathcal{B}(X)$ as an invariant subalgebra.\\
\\
To extend to the time-dependent case, we replace $\tau_t$ by $\tau_{t\rightarrow t_0}$, as the superoperator is no longer time-invariant. Now we may compute the time-derivative of the first term in the commutator. $U(t\leftarrow s)$ is the unitary operator corresponding to time-evolution from time $s$ to time $t$, $\Phi(Z,t)$ is the operator in the Hamiltonian at time $t$ acting on subset $Z$ of vertices, and $S_{\Lambda(X)}$ is the set of all subsets of vertices with non-empty intersection with both $X$ and $\bar{X}$.
\begin{equation}
\begin{aligned}
\frac{d}{dt}\tau_{t\leftarrow 0}\left(\tau^\text{loc}_{0\leftarrow t}\left(A\right)\right)&=\frac{d}{dt}\left[U^\dagger(t\leftarrow 0)U^\dagger_\text{loc}(0\leftarrow t)AU_\text{loc}(t \rightarrow 0)U(0 \rightarrow t)\right]\\
\\
&=iU^\dagger(t\leftarrow 0)H(t)U^\dagger_\text{loc}(0\leftarrow t)AU_\text{loc}(t \rightarrow 0)U(0 \rightarrow t)\\
&\hspace{20pt}-iU^\dagger(t\leftarrow 0)H_\text{loc}(t)U^\dagger_\text{loc}(0\leftarrow t)AU_\text{loc}(t \rightarrow 0)U(0 \rightarrow t)\\
&\hspace{20pt}+iU^\dagger(t\leftarrow 0)U^\dagger_\text{loc}(0\leftarrow t)AU_\text{loc}(t \rightarrow 0)H_\text{loc}(t)U(0 \rightarrow t)\\
&\hspace{20pt}-iU^\dagger(t\leftarrow 0)U^\dagger_\text{loc}(0\leftarrow t)AU_\text{loc}(t \rightarrow 0)H(t)U(0 \rightarrow t)\\
\\
&=i\tau_{t\leftarrow 0}\left(H(t)\right)T_t(A)-i\tau_{t\leftarrow 0}\left(\Hl(t)\right)T_t(A)+iT_t(A)\tau_{t\leftarrow 0}\left(\Hl(t)\right)-iT_t(A)\tau_{t\leftarrow 0}\left(H(t)\right)\\
\\
&=i\comm{\tau_{t\leftarrow 0}\left(H(t)\right)}{T_t(A)}-i\comm{\tau_{t\leftarrow 0}\left(\Hl(t)\right)}{T_t(A)}\\
\\
&=i\comm{\tau_{t\leftarrow 0}\left(H(t)-\Hl(t)\right)}{T_t(A)}\\
\\
&=i\sum_{Z\subset S_\Lambda(X)}\comm{\tau_{t\leftarrow 0}\left(\Phi(Z,t)\right)}{T_t(A)}.
\end{aligned}
\end{equation}
The time derivatives of the unitary time-evolution operators follow from the time-dependent Schr\"odinger equation. Now the time derivative of the function $f(t)$ is
\begin{equation}
\begin{aligned}
\frac{d}{dt}f(t)&=\comm{\frac{d}{dt}T_t(A)}{B}\\
\\
&=i\sum_{Z\subset S_\Lambda(X)}\comm{\comm{\tau_{t\leftarrow 0}\left(\Phi(Z,t)\right)}{T_t(A)}}{B}\\
\\
&=-i\sum_{Z\subset S_\Lambda(X)}\comm{\comm{T_t(A)}{B}}{\tau_{t\leftarrow 0}\left(\Phi(Z,t)\right)}-i\sum_{Z\subset S_\Lambda(X)}\comm{\comm{B}{\tau_{t\leftarrow 0}\left(\Phi(Z,t)\right)}}{T_t(A)}\\
\\
&=i\sum_{Z\subset S_\Lambda(X)}\comm{\tau_{t\leftarrow 0}\left(\Phi(Z,t)\right)}{f(t)}-i\sum_{Z\subset S_\Lambda(X)}\comm{T_t(A)}{\comm{\tau_{t\leftarrow 0}\left(\Phi(Z,t)\right)}{B}}.
\end{aligned}
\end{equation}
This matches Eq. 2.28 of \cite{Nachtergaele2010}, and the rest of the proof proceeds as in that work.

\end{document}